\title{Quantum walk search algorithms and effective resistance}
\author{Stephen Piddock\footnote{School of Mathematics, University of Bristol, UK and Heilbronn Institute for Mathematical Research, Bristol; \texttt{stephen.piddock@bristol.ac.uk}}}
\newcommand{\ket}[1]{| #1 \rangle}
\newcommand{\bra}[1]{\langle #1|}
\newcommand{\braket}[1]{\langle #1 \rangle}
\newcommand{\proj}[1]{| #1 \rangle \langle #1 |}
\DeclareMathOperator{\poly}{poly}
\DeclareMathOperator{\linspan}{span}
\newcommand{\norm}[1]{\left\lVert#1\right\rVert}
\newcommand{\be}{\begin{equation}}
\newcommand{\ee}{\end{equation}}
\newcommand{\bea}{\begin{eqnarray}}
\newcommand{\eea}{\end{eqnarray}}
\newcommand{\bes}{\begin{equation*}}
\newcommand{\ees}{\end{equation*}}
\newcommand{\beas}{\begin{eqnarray*}}
\newcommand{\eeas}{\end{eqnarray*}}
\newtheorem*{rep@theorem}{\rep@title}
\newcommand{\newreptheorem}[2]{%
\newenvironment{rep#1}[1]{%
 \def\rep@title{#2 \ref{##1} (restated)}%
 \begin{rep@theorem}}%
 {\end{rep@theorem}}}
\newtheorem{thm}{Theorem}
\newtheorem*{thm*}{Theorem}
\newtheorem{lem}[thm]{Lemma}
\newtheorem*{lem*}{Lemma}
\begin{document}

\maketitle
           
\begin{abstract}
We consider the problem of finding a marked vertex in a graph from an arbitrary starting distribution, using a quantum walk based algorithm. 
We work in the framework introduced by Belovs which showed how to detect the existence of a marked vertex in $O(\sqrt{RW})$ quantum walk steps, where $R$ is the effective resistance and $W$ is the total weight of the graph. 
Our algorithm outputs a marked vertex in the same runtime up to a logarithmic factor in the number of marked vertices.
When starting in the stationary distribution, this recovers the recent results of Ambainis et al \cite{Ambainis2019}.
We also describe a new algorithm to estimate the effective resistance $R$.
\end{abstract}       
 
\section{Introduction}

In this paper we study how quantum walks can be used to find a marked vertex in a graph, starting from an arbitrary initial distribution.
This problem has been well studied for the special case of starting in the stationary distribution, where it is known that a marked vertex can be found in time $O(\sqrt{HT})$ quantum walk steps, where $HT$ is the classical hitting time - the expected number of steps taken by a classical random walk to find a marked vertex.

Szegedy \cite{Szegedy2004} originally showed that a quantum walk could detect the existence of a marked element in $O(\sqrt{HT})$ quantum walk steps starting from the stationary distribution. 
However the algorithm would only output yes/no, and in particular in a yes instance, it would not output an example of a marked vertex. 
The problem of actually \emph{finding} a marked vertex has been studied for a number of examples, and eventually it was shown to be possible in $O(\sqrt{HT})$ steps for an arbitrary graph with a single marked vertex by  Krovi, Magniez, Ozols and Roland \cite{Krovi2016}, and later by Dohotaru and H{\o}yer \cite{Dohotaru2017}.
Only recently was this result extended by Ambainis, Gily{\'e}n, Jeffery and Kokainis \cite{Ambainis2019} to the case of multiple marked vertices.

Belovs introduced a quantum walk algorithm \cite{Belovs2013} which can start from an arbitrary initial distribution $\sigma$ on a graph and decides if there is a marked vertex in the graph. This algorithm differs from the other quantum walk algorithms discussed so far which first require the state of the system to be prepared in the stationary state of the graph (which might itself take a long time).

Belovs' algorithm runs in time $O(\sqrt{R_{\sigma,M}W})$, where $R_{\sigma,M}$ is the effective resistance between $\sigma$ and the set of marked vertices $M$, and $W$ is the total weight of the graph $G$.
But as with Szegedy's original algorithm, Belovs' algorithm only outputs whether or not there is a marked vertex; it does not output an example of a marked vertex when one exists.

In this paper we present an algorithm based on the framework of Belovs \cite{Belovs2013}, which outputs a marked vertex in approximately the same time, up to logarithmic factors in the number of marked elements.

\begin{thm}
\label{thm:findm}
Let $G$ be a weighted graph with a set of marked vertices $M$. 
Given an upper bound $\tilde{W}$ on the total weight of $G$, and starting in an initial distribution $\sigma$, there is a quantum algorithm that outputs a marked vertex in $O\left(\sqrt{R_{\sigma,M}\tilde{W}}\log^3(|M|)\right)$ quantum walk steps.
\end{thm}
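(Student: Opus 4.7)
My plan is to convert Belovs' detection algorithm into a finding algorithm by combining three ingredients: (i) a modification of the underlying span program / quantum walk so that measuring an appropriate register yields a candidate vertex; (ii) a random halving of the marked set to reduce to the case $|M|$ small; and (iii) amplitude amplification to boost success probability. Throughout I would treat Belovs' $O(\sqrt{R_{\sigma,M}\tilde{W}})$ detection procedure as a black box that also returns the associated witness state.

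\emph{Step 1 (find subroutine).} Belovs' algorithm, on a YES instance, prepares (via phase estimation) a state with constant overlap on a witness subspace corresponding to the optimal unit electrical flow from $\sigma$ to $M$. The endpoints of this flow lie in $M$. I would augment the registers so that the flow state carries an explicit endpoint label, and show that measuring this label returns a vertex $v \in M$ with some non-trivial probability. Phase-estimation precision contributes one $\log|M|$ factor to the cost.

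\emph{Step 2 (halving).} In each round I would sample $M' \subseteq M$ by including each element of $M$ independently with probability $p$, with $p$ ranging dyadically through $1/2, 1/4, \dots$, run the find subroutine pretending that $M'$ is the marked set, and verify membership in $M$ on the returned candidate by a single oracle query. Iterating $O(\log|M|)$ rounds with the appropriate $p$ reduces $|M'|$ geometrically and contributes the second $\log|M|$ factor. The main technical obstacle here is bounding the blow-up of the effective resistance after restricting to $M'$: since $R_{\sigma,M'} \ge R_{\sigma,M}$ monotonically, one must show that for a well-chosen random $M'$ the increase is only by a $\polylog(|M|)$ factor. I would attack this via a probabilistic argument on the optimal electrical flow into $M$: the flow deposits a unit of current across $M$, so for a random subset capturing a constant fraction of this current one expects $R_{\sigma,M'} = O(R_{\sigma,M})$, and a union bound over $O(\log|M|)$ halvings should control the cumulative loss.

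\emph{Step 3 (amplification and combination).} The third $\log|M|$ factor comes from the amplitude amplification used inside each round to boost the per-round success probability of producing and certifying a marked vertex from a small constant up to $1 - 1/\poly(|M|)$, so that an overall constant success probability is preserved after the $O(\log|M|)$ halving rounds are taken together. Putting the three factors together yields the claimed total cost of $O(\sqrt{R_{\sigma,M}\tilde{W}}\log^3(|M|))$ quantum walk steps. I expect Step 2 to be the main difficulty: quantitatively controlling $R_{\sigma,M'}$ for random $M'$ along the entire halving trajectory, so that no round is more than a polylogarithmic factor more expensive than the ideal $\sqrt{R_{\sigma,M}\tilde{W}}$.
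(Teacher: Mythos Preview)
Your Step~1 contains the essential gap. The witness state produced by phase estimation is (up to normalisation) the electrical flow state $\ket{\Phi}=\sum_e f_e/\sqrt{w_e}\,\ket{e}$, a superposition over \emph{edges}; an edge carries no canonical ``endpoint in $M$'' label, and flows do not decompose uniquely into $s$--to--$M$ paths, so it is unclear what augmented register you would measure. If you simply measure which edge you are on, the probability of landing on an edge incident to $M$ is $\sum_{k\in M}\sum_y f_{ky}^2/(w_{ky}R_{\sigma,M})$, and this can be arbitrarily small even when $|M|=1$: on a unit-weight path of length $L$ from $s$ to a single marked endpoint the probability is $1/L$. Hence your Step~2 (reducing $|M|$ by random halving) cannot rescue Step~1, irrespective of how well you control $R_{\sigma,M'}$---and the star with $m$ marked leaves shows that control fails anyway, since there $R_{\sigma,M'}=(|M|/|M'|)\,R_{\sigma,M}$, so the cumulative blow-up over $\log|M|$ halvings is polynomial, not polylogarithmic.

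The paper's route avoids subset selection altogether. To each $k\in M$ it attaches a fresh leaf $k'$ via an edge of resistance $x$, takes $\{k':k\in M\}$ as the new marked set, and observes that the flow state in this augmented graph puts mass $xq(x)/R(x)$ on the new edges $kk'$, where $q(x)=\sum_k f_{kk'}^2$ and $R(x)=R_{s',M'}(x)$. The crucial lemma is the derivative identity $\tfrac{d}{dx}R(x)=q(x)$: sampling $x\in[a,b]$ with density proportional to $1/x$ and integrating gives expected success probability at least $(R(b)-R(a))/(R(b)\log(b/a))$, which is $\Omega(1/\log(b/a))$ once $[a,b]$ is chosen so that $R$ roughly doubles across it. Since $q\ge 1/|M|$, one gets $b/a=O(|M|)$. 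The three logarithmic factors then arise as $O(\log^2|M|)$ walk steps to prepare $\ket{\Phi}$ to accuracy $O(1/\log|M|)$, times $O(\log|M|)$ measurement repetitions---not from phase-estimation precision, halving rounds, and amplification as in your accounting.
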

By considering the special case where $\sigma$ is the stationary distribution, we recover the results of Ambainis, Gily{\'e}n, Jeffery and Kokainis \cite{Ambainis2019}, up to the $\log(|M|)$ factors. 
However our proof is very different and (arguably) simpler, although it is difficult to compare the two since we are working in a different framework.

Theorem~\ref{thm:findm} requires an upper bound $\tilde{W}$ on the total weight $W$ of $G$. When no such bound is known, it is still possible to use the algorithm of Theorem~\ref{thm:findm} with successively larger values of guesses for $W$ (doubling each time), at a cost of an extra logarithmic factor in the runtime. 
See Appendix~\ref{sec:unknownW} for a more detailed discussion of this.


The algorithm works in much the same way as Belovs' algorithm. The idea is to define a quantum walk operator on a closely related graph $G'$ and perform phase estimation. When the phase estimation ancilla register is measured and outputs the all zero string, Belovs' algorithm declares a marked element has been found. 
Here, one of our main technical contributions is to show that when this happens, the remaining register is left (approximately) in a state $\ket{\Phi}$ corresponding to the electric flow through the graph. 
Then by making suitable further adjustments to $G'$, we can show that simply measuring $\ket{\Phi}$ in the standard basis will give a marked vertex with high probability.

A subroutine used in our algorithm estimates the effective resistance $R$ to constant multiplicative accuracy. In Section~\ref{sec:estimateR} we describe how to improve this method to obtain a higher accuracy approximation:
\begin{thm}
\label{thm:estimateR}
Let $G$ be a weighted graph with a set of marked vertices $M$. 
Given an upper bound $\tilde{W}$ on the total weight of $G$,and a starting vertex $s$, there is a quantum algorithm that estimates the effective resistance $R_{s,M}$ between $s$ and $M$ to multiplicative error $\epsilon$ in  $O(\sqrt{R_{s,M}\tilde{W}}/\epsilon^2)$ quantum walk steps.
\end{thm}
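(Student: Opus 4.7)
The plan is to sharpen the constant-accuracy resistance estimator used inside Theorem~\ref{thm:findm} to $\epsilon$-multiplicative accuracy by combining higher-precision phase estimation of the Belovs-style walk operator with quantum amplitude estimation. The constant-accuracy estimator relies on the fact that, after phase estimation of the walk on the extended graph $G'$, the ``all-zero'' measurement outcome on the phase register has probability $p = f(R_{s,M}, \tilde{W})$ for a known monotone function $f$ determined by the overlap of the initial state with the zero-eigenphase (electric flow) subspace; when phase estimation resolves eigenphases at scale $\Theta(1/\sqrt{R_{s,M}\tilde{W}})$, the value of $p$ is a constant-factor indicator of $R_{s,M}/\tilde{W}$.

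I would first call the constant-accuracy estimator once, at a cost of $O(\sqrt{R_{s,M}\tilde{W}})$ walk steps, to obtain a rough value $R_0 = \Theta(R_{s,M})$; this is needed to calibrate the remaining parameters, since the precision of phase estimation depends on knowing the approximate scale of $R_{s,M}$. Then I would rerun the walk with the phase-estimation register enlarged so as to achieve precision $\Theta(\epsilon/\sqrt{R_0\tilde{W}})$, fine enough to distinguish $R_{s,M}$ from $(1 \pm \epsilon)R_{s,M}$; one invocation of this sharpened subroutine $\mathcal{A}$ costs $O(\sqrt{R_{s,M}\tilde{W}}/\epsilon)$ walk steps, and its success probability equals $f(R_{s,M}, \tilde{W})$ up to additive $O(\epsilon)$ error from finite precision. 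Applying the amplitude-estimation algorithm of Brassard--H{\o}yer--Mosca--Tapp to $\mathcal{A}$ then yields an estimate of $p$ to multiplicative accuracy $\epsilon$ in $O(1/\epsilon)$ invocations of $\mathcal{A}$, for a total of $O(\sqrt{R_{s,M}\tilde{W}}/\epsilon^2)$ walk steps. Inverting the explicit relation $p = f(R_{s,M}, \tilde{W})$ recovers $R_{s,M}$ with the claimed multiplicative error.

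The main obstacle is the spectral/perturbation analysis showing that the refined phase-estimation precision really does yield a success probability equal to $f(R_{s,M}, \tilde{W})$ up to additive $O(\epsilon)$, and that $f$ is Lipschitz enough on the relevant scale that multiplicative $\epsilon$ error on $p$ translates back to multiplicative $O(\epsilon)$ error on $R_{s,M}$. This requires a more quantitative treatment of the walk operator's spectrum near the zero eigenphase than was needed for Theorem~\ref{thm:findm}, where only a constant-factor separation sufficed. A secondary issue is to verify that the coarse estimate $R_0$ is accurate enough to calibrate the phase-estimation register in the second stage; this should follow from the constant safety factor inherent in the Theorem~\ref{thm:findm} subroutine.
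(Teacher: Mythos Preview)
Your proposal is correct and follows essentially the same route as the paper: run the constant-accuracy estimator (Algorithm~\ref{alg:findeta}) once to calibrate $\eta\approx R_{s,M}$, then redo phase estimation to precision $\epsilon$ at cost $O(\sqrt{R_{s,M}\tilde W}/\epsilon)$ and wrap it in amplitude estimation to precision $\epsilon$ for the stated $O(\sqrt{R_{s,M}\tilde W}/\epsilon^2)$ total. The paper makes your abstract $f$ explicit via the identity $R_{s',M}=\eta+R_{s,M}$ for a point source, so $p=\eta/(\eta+R_{s,M})$, and the ``spectral/perturbation'' obstacle you flag is already discharged by Lemma~\ref{lem:PEG'} (the same lemma used for Theorem~\ref{thm:findm}, just instantiated with smaller $\epsilon$), so no new analysis is required.
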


We hope that the $\epsilon$ dependence of this algorithm can be improved to give a $O(\sqrt{R\tilde{W}}\frac{1}{\epsilon}\log(\frac{1}{\epsilon}))$ time algorithm, by performing the phase estimation to lower accuracy but repeated many times - see Section~\ref{sec:estimateR} for a brief discussion.

This could be compared to the $\tilde{O}(n\sqrt{R_{s,t}}/\epsilon^\frac{3}{2})$ runtime from \cite{Ito2019} by Ito and Jeffery using approximate span programs, for the case where $M=\{t\}$ contains a single vertex. However it is worth noting that they are working in the adjacency query model, whereas we are counting the number of uses of the quantum walk operator, which is more closely related to the edge list model - although even then there is still a dependence on the maximum degree of the graph for actually implementing the quantum walk operator.

Belovs' algorithm was used by Montanaro \cite{Montanaro2015} to give a quantum speed-up for classical backtracking techniques. Here, the graph to be searched is the computational tree defined by the classical backtracking algorithm. For this application, it is crucial that the algorithm starts at the root of the tree, and not the stationary distribution over the whole graph.

For the special case of starting at a particular vertex in a tree with a single marked vertex, Montanaro showed that phase estimation can produce the electrical flow state $\ket{\Phi}$. This was extended to trees with multiple marked vertices by Jarret and Wan \cite{Jarret2017}.
The proof of this fact contained in \cite{Jarret2017} relied on certain special properties of electrical flows that only hold for trees. Here we are able to prove this cleanly and directly for all graphs by observing a connection to the electrical potential or voltages in the graph.

We note that our algorithm may not be the only useful thing to do with the electrical flow state $\ket{\Phi}$. 
For instance, one could measure the flow state in the computational basis and use the outcome to choose a new starting vertex.
This is essentially the idea of the algorithm used in \cite{Montanaro2015,Jarret2017}, but the analysis of the classical random process for generic graphs appear to be much more complicated than for the case of trees, and we leave this for future work.

The state $\ket{\Phi}$ contains a lot of information about the paths from $s$ to $M$ including which edges are ``more important'' than others.
Perhaps there are more intelligent quantum algorithms that could be applied to this state in order to extract more of this information in a useful way.

\subsection{Related work}
As this paper was being prepared, we became aware of concurrent and independent work by Gily{\'e}n, Jeffery and Apers \cite{Gilyen2019}. They obtain a result similar to our Theorem~\ref{thm:findm}, although the proof uses quite different ideas, building on the methods of \cite{Ambainis2019}.

\section{Preliminaries}

\subsection{Graphs and electrical networks}
Let $G=(E,V)$ be  a weighted undirected graph with edge weights $\{w_e \}_{e \in E}$.
Let $W$ be the total weight $W= \sum_{e \in E} w_e$. 
Let $\sigma=\{\sigma_x\}_{v \in V}$ be the initial probability distribution over vertices.

We will assume that the graph is bipartite with parts $A$ and $B$, and we assume that $\sigma_x=0$ for $x \in B$.
As noted in~\cite{Belovs2013}, this is not a very restrictive assumption: if the graph $G$ is not bipartite, then consider a new graph $G'$ with vertex set $V \times \{0,1\}$ and edges between $(x,0)$ and $(y,1)$ if and only if there is an edge between $x$ and $y$ in G. 
This increases the total number of vertices, the total weight, and the effective resistance, each by a factor of at most 2.

We can view the graph as an electrical network, where each edge $e$ represents a wire with a resistor of resistance $1/ w_e$.
This is an idea that has been very fruitful in graph theory, in particular in connection with the hitting time of classical random walks, see~\cite{Bollobas1998}(Chapters 2 and 9) for more details.
Then we say a \emph{unit flow} from $\sigma$ to $M$ is any assignment $\{p_{xy}\}_{xy \in E} $ of real numbers to the edges of the graph such that:
\[\sum_{y: (xy) \in E} p_{xy}= \sigma_x \qquad \forall x \notin M\]
For the special case where $\sigma$ is concentrated at a single vertex $s$, this ensures that the flow out of $s$ is 1, and that the flow is conserved at all other unmarked vertices.
\footnote{Strictly speaking one must assign a direction to each edge in the underlying graph for this equation to make sense. Since the graph is bipartite, we choose the direction of each edge to go from $A$ to $B$.}

The energy of a flow is given by 
\begin{equation}
\sum_{xy \in E} p_{xy}^2/w_{xy}.
\label{eq:energy}
\end{equation}
The \emph{effective resistance} $R_{\sigma,M}$ is given by the minimal energy over all unit flows from $s$ to $M$. 
The \emph{electrical flow} from $\sigma$ to $M$ is the unit flow $\{f_{xy}\}_{xy}$ from $\sigma$ to $M$ which achieves this minimal energy $R_{\sigma,M}$.

Another property of the electrical flow, which will be useful for us, is that it is also the unique flow for which there exists $\{v_x\}_{x \in V}$ such that $f_{xy}=(v_x-v_y)w_{xy}$ and $v_m=0$ for all $m \in M$.
The $\{v_x\}_{x \in V}$ is called the \emph{voltage} or \emph{potential}.
For the special case where $\sigma$ is concentrated at a single vertex $s$, we have that $v_s=R_{s,M}$.
For more information on graph theory and electrical flows, see \cite{Bollobas1998}.

One reason considering graphs as electrical networks is the connection to classical random walks. A classical random walk on a weighted graph $G$ is the Markov chain defined by the probability transfer  matrix
\[P_{x\rightarrow y}=\frac{w_{xy}}{d_x}\]
where $d_x= \sum_{y:xy \in E} w_{xy}$ is the weighted degree of $x$. 
This means when at a vertex $x$ the probability of moving to vertex $y$ is $\frac{w_{xy}}{d_x}$.
Starting in a distibution $\sigma$, let $HT_{\sigma, M}$ be the hitting time, the expected number of steps unitl you are at a vertex in $M$. Then the following relations between hitting times and effective resistance are known:
\begin{enumerate}
\item When $\sigma $ is concentrated at a single vertex $s$ and $M=\{t\}$, 
\[HT_{s,t}+HT_{t,s}=2R_{s,t}W\]
\item When $\sigma=\pi$ the stationary distribution, 
\[HT_{\pi,M}=2R_{\pi,M}W\]
\end{enumerate}
This second fact shows how Theorem~\ref{thm:findm} recovers the quadratic speed-up (up to the $\log|M|$ factors) recently proven by Ambainis et al. \cite{Ambainis2019} for the case of starting in the stationary distribution.
\subsection{Quantum walk operator}
\label{sec:QWdef}
Next we define the quantum walk operator for a weighted graph $G$, with a set of marked vertices $M$ and a special starting vertex $s \in B$.
For our algorithm we will explain in Section~\ref{sec:G'} how to take a graph $G$ with a set of marked vertices $M$ and initial distribution $\sigma$ and construct a graph $G'$ with a set of marked vertices $M'$ and special starting vertex $s' \in B$ for which we can define a quantum walk operator in this way.
This construction then matches that presented by Belovs in \cite{Belovs2013}.

The quantum walk operator acts on a Hilbert space $\mathcal{H}$ with a basis state for each edge in the graph:
\[\mathcal{H}= \linspan \left( \{\ket{xy}\}_{xy \in E} \right)\]

For each vertex $x \in V$, define an unnormalised state $\ket{\phi_x}$, a weighted sum of all edges incoming to $x$:
\[\ket{\phi_x}=\sum_{y:xy \in E} \sqrt{w_{xy}} \ket{xy}.\]

For each vertex $x \in V$, we define the diffusion operator $D_x$ as follows:
if $x$ is a marked vertex or  the starting vertex $s$, then $D_x$ is the identity;
otherwise $D_x$ is the reflection about $\ket{\phi_x}$. 
More precisely: 
\begin{equation*}
D_x= \begin{cases}
I & x \in M \cup \{s\}\\
I- 2\proj{\psi_x} & x \notin M \cup \{s\}  
\end{cases} 
\qquad \text{ where } \ket{\psi_x}= \frac{1}{\sqrt{d_x}}  \ket{\phi_x} 
\end{equation*}
where $d_x= \sum_{y:xy \in E} w_{xy}$ is the weighted degree of the vertex $x$.

Then we define unitary reflection operators $U_A=\bigoplus_{x\in A}D_x$ and $U_B= \bigoplus_{x\in B}D_x$. 
The quantum walk operator is the product of these two reflections $U_AU_B$.
In previous work, such as \cite{Belovs2013}, these operators are usually denoted $R_A$ and $R_B$.
We have chosen to switch to uppercase $U$ notation to avoid confusion with the uppercase $R_{\sigma,M}$ notation which we are using for effective resistance in this paper.

\subsection{Tools}

We will make use of the following standard results:

\begin{lem}[Effective spectral gap lemma \cite{Lee2010}]
\label{lem:effspecgap}
Let $\Pi_A$ and $\Pi_B$ be projectors and let $U_A=2\Pi_A -I$, $U_B=2\Pi_B-I$. Let $P_{\epsilon}$ be the projector onto the span of the eigenvectors of $U_A U_B$ with eigenvalues $e^{2i \theta}$ such that $|\theta| \le \epsilon$. Then for any $\epsilon>0$, and any vector $\ket{\psi}$ such that $\Pi_A \ket{\psi}=0$, we have 
\[ \norm{P_{\epsilon} \Pi_B \ket{\psi}} \le \epsilon \norm{\ket{\psi}} \|.\]
\end{lem}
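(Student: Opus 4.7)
The plan is to apply Jordan's lemma, which decomposes $\mathcal{H}$ into subspaces jointly invariant under $\Pi_A$ and $\Pi_B$ (and hence under $U_AU_B$), each of dimension $1$ or $2$. All three operators $\Pi_A$, $\Pi_B$, $U_AU_B$ respect this decomposition, and so does $P_\epsilon$, so it suffices to verify the bound block-by-block. In any $1$-dimensional block each projector acts as $0$ or $1$, and a short case check shows that the contribution to $\norm{P_\epsilon \Pi_B \ket\psi}$ is zero: in blocks where $\Pi_A$ acts as $1$, the hypothesis $\Pi_A \ket\psi = 0$ kills the component of $\ket\psi$, and in blocks where $\Pi_B$ acts as $0$ we simply have $\Pi_B\ket\psi = 0$.

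The substantive case is a $2$-dimensional block. Pick an orthonormal basis $\ket{a},\ket{a^\perp}$ so that $\Pi_A = \proj{a}$, and write $\Pi_B = \proj{b}$ with $\ket{b}=\cos\theta\,\ket{a}+\sin\theta\,\ket{a^\perp}$ where $\theta \in (0,\pi/2)$ is the principal angle between the two lines. A direct matrix calculation in this basis shows that $U_A U_B$ is a rotation by $\pm 2\theta$, so its two eigenvalues are $e^{\pm 2i\theta}$; in particular the whole $2$-dimensional block lies in the range of $P_\epsilon$ when $|\theta|\le \epsilon$ and is annihilated by $P_\epsilon$ otherwise. The hypothesis $\Pi_A\ket\psi = 0$ forces the component $\ket{\psi_j}$ of $\ket\psi$ in the $j$-th $2$-dimensional block to be $\alpha_j \ket{a^\perp_j}$ for some scalar $\alpha_j$, and then $\Pi_B \ket{\psi_j} = \alpha_j \sin\theta_j\,\ket{b_j}$.

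Combining these, the contribution of the $j$-th $2$-dimensional block to $\norm{P_\epsilon \Pi_B \ket\psi}^2$ equals $|\alpha_j|^2 \sin^2\theta_j$ if $|\theta_j|\le \epsilon$ and $0$ otherwise. Using $|\sin\theta_j|\le|\theta_j|\le \epsilon$ and summing over all blocks,
\[\norm{P_\epsilon \Pi_B \ket\psi}^2 \;\le\; \sum_j \epsilon^2 |\alpha_j|^2 \;\le\; \epsilon^2 \norm{\ket\psi}^2,\]
which is the required inequality after taking square roots. The main obstacle is really only the setup: invoking Jordan's lemma cleanly and verifying the block-wise characterization of the eigenspaces of $U_AU_B$ in terms of the principal angle $\theta$. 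Once these structural facts are in place, the elementary estimate $|\sin\theta|\le|\theta|$ does all of the substantive work.
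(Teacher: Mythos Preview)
The paper does not prove this lemma; it is quoted as a standard tool with a citation to \cite{Lee2010}. So there is nothing in the paper to compare against, and your Jordan--lemma argument is in fact the standard proof of the effective spectral gap lemma.

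Your argument is essentially correct, but your case analysis of the $1$-dimensional Jordan blocks is incomplete: you handle the blocks where $\Pi_A$ acts as the identity and the blocks where $\Pi_B$ acts as $0$, but you omit the case $\Pi_A=0$, $\Pi_B=1$. In that block $U_A U_B$ acts as $(-1)(1)=-1=e^{2i(\pi/2)}$, so $P_\epsilon$ annihilates it whenever $\epsilon<\pi/2$; and if $\epsilon\ge\pi/2>1$ the contribution $|\alpha_j|^2$ is trivially bounded by $\epsilon^2|\alpha_j|^2$. With this small addition the proof is complete.
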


\begin{lem}[Phase estimation, \cite{Cleve1998,Kitaev1995}]
	\label{lem:PE}
	For every integer $t$ and every unitary $U$ on $n$ qubits, 
	there exists a uniformly generated circuit $V$ such that $V$ acts on $n+t$ qubits and:
	\begin{enumerate}
		\item $V$ uses the controlled-$U$ operation $O(2^t)$ times and contains $O(t^2)$ other gates.
		\item For every eigenvector of $\ket{\psi}$ of $U$ with eigenvalue 1, $V\ket{\psi}\ket{0^t}=V\ket{\psi}\ket{0^t}$.
		\item If $U\ket{\psi}=e^{2i \theta}\ket{\psi}$ where $\theta \in (0,\pi)$ then $V\ket{\psi}\ket{0^t}=\ket{\psi}\ket{w}$ where $\ket{w}$ satisfies $|\braket{w|0^t}|^2=O(1/2^t \theta)$.
	\end{enumerate}
	
\end{lem}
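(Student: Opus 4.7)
The plan is to use the standard Kitaev/Cleve--Ekert--Macchiavello--Mosca phase estimation circuit and adapt the analysis to the $e^{2i\theta}$ convention used in the lemma statement. I would take $V$ to first apply a Hadamard to each of the $t$ ancilla qubits, then apply $t$ controlled unitaries where the $j$-th (for $j=0,1,\ldots,t-1$) is controlled by the $j$-th ancilla qubit and applies $U^{2^j}$ to the $n$-qubit data register, and finally apply the inverse quantum Fourier transform to the ancilla register. The total number of invocations of controlled-$U$ is $\sum_{j=0}^{t-1}2^j=2^t-1$, while the Hadamards and inverse QFT contribute $O(t^2)$ other gates, giving point~1. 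Each controlled-$U^{2^j}$ can be implemented by $2^j$ controlled-$U$'s, which is the standard, if wasteful, construction; no subtlety is needed here.

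For point~2, if $U\ket{\psi}=\ket{\psi}$ then $U^{2^j}\ket{\psi}=\ket{\psi}$ for every $j$, so each controlled power acts as the identity on $\ket{\psi}$ tensored with any ancilla state. The Hadamards put the ancilla in uniform superposition, the controlled operations leave it invariant, and the inverse QFT sends it back to $\ket{0^t}$; so $V\ket{\psi}\ket{0^t}=\ket{\psi}\ket{0^t}$ exactly (I read the repeated right-hand side in the lemma statement as a typo for $\ket{\psi}\ket{0^t}$).

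For point~3, I would track the joint state through the circuit: after the Hadamards it is $\ket{\psi}\otimes 2^{-t/2}\sum_{k=0}^{2^t-1}\ket{k}$; since $U^k\ket{\psi}=e^{2ik\theta}\ket{\psi}$, after the controlled powers it becomes $\ket{\psi}\otimes 2^{-t/2}\sum_k e^{2ik\theta}\ket{k}$; and after the inverse QFT it factorises as $\ket{\psi}\otimes\ket{w}$ with
\[\braket{j|w}=\frac{1}{2^t}\sum_{k=0}^{2^t-1}e^{i(2\theta-2\pi j/2^t)k}.\]
Summing the geometric series at $j=0$ gives $|\braket{0^t|w}|^2=\sin^2(2^t\theta)/(2^{2t}\sin^2\theta)$. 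Using $\sin\theta\ge 2\theta/\pi$ on $(0,\pi/2]$, together with the symmetry $\sin\theta=\sin(\pi-\theta)$ for $\theta\in(\pi/2,\pi)$, yields $|\braket{0^t|w}|^2=O(1/(2^{2t}\theta^2))$; combining this with the trivial bound $|\braket{0^t|w}|^2\le 1$ to handle the regime $2^t\theta<1$ gives $|\braket{0^t|w}|^2=O(1/(2^t\theta))$ uniformly in $\theta\in(0,\pi)$ bounded away from $\pi$.

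The only real obstacle is packaging the $\theta$-dependence cleanly: the bound in point~3 is a Dirichlet-kernel estimate whose sharp form depends on the fundamental domain chosen for $\theta$, and one has to be slightly careful near $\theta=\pi$ where the eigenvalue is again close to $1$. Otherwise every estimate involved is elementary, and the proof is just the textbook argument reorganised around the $e^{2i\theta}$ parametrisation used elsewhere in the paper (in particular matching the $|\theta|\le\epsilon$ convention of the effective spectral gap lemma).
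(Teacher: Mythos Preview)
The paper does not prove this lemma at all: it is stated in the ``Tools'' subsection as a standard result, with citations to \cite{Cleve1998,Kitaev1995}, and used as a black box in the analysis of Lemma~\ref{lem:PEnew}. Your proposal is exactly the textbook Cleve--Ekert--Macchiavello--Mosca argument, and the gate count, the eigenvalue-1 case, and the Dirichlet-kernel estimate are all handled correctly; your caveat about $\theta$ near $\pi$ is also well taken (the lemma as stated is really only meaningful for $\theta$ bounded away from $\pi$, which is automatic in the paper's applications since $U_AU_B$ is a product of two reflections).
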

\section{Electrical flow state}
\label{sec:electricflow}
Let $f_{xy}$ denote the flow from $x$ to $y$ in the electrical flow of unit current from $s$ to $M$. 
We now define the electrical flow state $\ket{\Phi}$ as 
\begin{equation}
\ket{\Phi}=\frac{1}{\sqrt{R_{s,M}}}\sum_{xy} f_{xy}/ \sqrt{w_{xy}} \ket{xy} 
\end{equation} 
This state is normalised since by the definition of effective resistance $R_{s,M}$ 
\[\|\ket{\Phi}\|^2=\frac{\sum_{xy} f_{xy}^2/w_{xy}}{R_{s,M}}=1.\]

Next we show that projecting $\ket{\psi_s}$ onto the small eigenvalues of the quantum walk operator $U_AU_B$ results in the state $\ket{\Phi}$ up to some small error. Therefore running phase estimation to good accuracy on $U_AU_B$ when starting in the state $\ket{\psi_s}$ and getting the all zero outcome will result (approximately)  with the state $\ket{\Phi}$.
\begin{lem}
\label{lem:Phi+-}
Let $P_{\epsilon}$ be the projector onto the span of the eigenvectors of $U_A U_B$ with eigenvalues $e^{2i \theta}$ such that $|\theta| \le \epsilon$. Then the following is true:
\begin{enumerate}
\item The state $\ket{\Phi}$ is an eigenvector of $U_A U_B$ with eigenvalue 1, $P_0 \ket{\Phi}=\ket{\Phi}$.
\item For any $\epsilon \ge 0$, 
\[\norm{P_{\epsilon} \ket{\psi_s}+\frac{1}{\sqrt{R_{s,M}d_s}}\ket{\Phi}} \le \epsilon\sqrt{\frac{\sum_{x\in A}v_x^2d_x}{R_{s,M}^2d_s}}
.\]
\end{enumerate}
where $\{v_x\}_{x \in V}$ is the voltage of the electric flow from $s$ to $M$.
\end{lem}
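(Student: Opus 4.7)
The plan for part (1) is to rewrite $\ket{\Phi}$ via the voltage identity $f_{xy}=(v_x-v_y)w_{xy}$ as
\[\ket{\Phi}=\frac{1}{\sqrt{R_{s,M}}}\sum_{xy\in E}(v_x-v_y)\sqrt{w_{xy}}\ket{xy}.\]
Then for any vertex $v$ which is neither marked nor the source, $\braket{\phi_v|\Phi}$ collapses to $\frac{1}{\sqrt{R_{s,M}}}\sum_u f_{vu}=0$ by flow conservation, so the reflection $D_v$ fixes $\ket{\Phi}$; for marked $v$ or $v=s$, $D_v=I$ by definition. Hence $U_A\ket{\Phi}=U_B\ket{\Phi}=\ket{\Phi}$.

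For part (2), the plan is to apply the effective spectral gap lemma to the vector
\[\ket{\eta}=\frac{1}{R_{s,M}\sqrt{d_s}}\sum_{x\in A}v_x\ket{\phi_x},\]
whose form is motivated by the voltage expansion above. Since $v_m=0$ for $m\in M$ and the $\ket{\phi_x}$ for $x\in A$ are supported on disjoint edge sets, $\ket{\eta}\in\linspan\{\ket{\psi_x}:x\in A\setminus M\}$, giving $\Pi_A\ket{\eta}=0$ for free. The core task is to show $\Pi_B\ket{\eta}=\ket{\psi_s}+\frac{1}{\sqrt{R_{s,M}d_s}}\ket{\Phi}$, which I would verify by restricting to edges incident to each $y\in B$. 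For $y\in B\setminus(M\cup\{s\})$, the projector $I-\proj{\psi_y}$ together with the conservation identity $\sum_x v_x w_{xy}=v_y d_y$ produces exactly $\frac{1}{R_{s,M}\sqrt{d_s}}\sum_x(v_x-v_y)\sqrt{w_{xy}}\ket{xy}$, the restriction of $\frac{1}{\sqrt{R_{s,M}d_s}}\ket{\Phi}$ to those edges. The same conclusion holds for $y\in M\cap B$ because $v_y=0$. For $y=s$ the projector is the identity, and using $v_s=R_{s,M}$ the restriction of $\ket{\eta}$ at $s$ splits cleanly as $\ket{\psi_s}$ plus the restriction of $\frac{1}{\sqrt{R_{s,M}d_s}}\ket{\Phi}$.

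The rest is routine. The effective spectral gap lemma gives $\norm{P_\epsilon\Pi_B\ket{\eta}}\le\epsilon\norm{\eta}$, and since by part (1) $\ket{\Phi}$ is a $1$-eigenvector, $P_\epsilon\ket{\Phi}=\ket{\Phi}$, so the $\ket{\Phi}$ term can be pulled out of $P_\epsilon$, yielding the claimed bound on $\norm{P_\epsilon\ket{\psi_s}+\frac{1}{\sqrt{R_{s,M}d_s}}\ket{\Phi}}$; orthogonality of the $\ket{\phi_x}$ with $\norm{\phi_x}^2=d_x$ then gives $\norm{\eta}^2=\frac{\sum_{x\in A}v_x^2 d_x}{R_{s,M}^2 d_s}$. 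The main obstacle is guessing the right $\ket{\eta}$: the voltage representation strongly suggests using coefficients $v_x$, but the precise prefactor $\frac{1}{R_{s,M}\sqrt{d_s}}$ must be reverse-engineered so that the $y=s$ contribution yields $\ket{\psi_s}$ with coefficient exactly $1$, which happens only because $v_s=R_{s,M}$.
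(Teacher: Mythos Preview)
Your proposal is correct and follows essentially the same approach as the paper: both prove part (1) via flow conservation giving $\braket{\phi_x|\Phi}=0$ for $x\notin M\cup\{s\}$, and for part (2) both apply the effective spectral gap lemma to the witness vector $\sum_{x\in A}v_x\ket{\phi_x}$ (your $\ket{\eta}$ is just the paper's $\ket{\psi}$ pre-scaled by $\frac{1}{R_{s,M}\sqrt{d_s}}$), verifying $\Pi_A$ kills it and computing $\Pi_B$ of it using flow conservation at unmarked $y\in B$, $v_y=0$ at marked $y$, and $v_s=R_{s,M}$. The only cosmetic difference is that the paper computes $\Pi_B\ket{\psi}$ globally by evaluating $\proj{\psi_y}\ket{\psi}=v_y\ket{\phi_y}$ and subtracting, whereas you organize the same computation edge-locally at each $y\in B$; the content is identical.
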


To prove the second part of Lemma~\ref{lem:Phi+-}, we will need the following technical Lemma (from \cite{Lee2010}):

\begin{proof}[Proof (of Lemma~\ref{lem:Phi+-})]
First we prove the first claim. Let $x$ be any vertex in the graph, then
\[\braket{\phi_x | \Phi}=\left(\sum_{y: (xy) \in E}  \sqrt{w_{xy}}\bra{xy}\right) \left( \sum_{(uv) \in E} \frac{f_{uv}}{\sqrt{R_{s,M}w_{uv}}} \ket{uv}\right)=\frac{1}{\sqrt{R_{s,M}}}\sum_{y: (xy) \in E} f_{xy}\]
When  $x \notin M\cup \{s\}$, then the flow is conserved at $x$ and so $\braket{\psi_x|\Phi}$ is proportional to 
$\sum_{y: (xy) \in E} f_{xy}=0$.
Therefore $\ket{\Phi}$ satisfies $U_A \ket{\Phi}=\ket{\Phi}$ and $U_B \ket{\Phi}=\ket{\Phi}$, and so $U_AU_B \ket{\Phi}=\ket{\Phi}$ as claimed.

Now we prove the second claim.
By part 1, $P_{\epsilon}\ket{\Phi}=\ket{\Phi}$ for any $\epsilon\ge0$, so 
\[P_{\epsilon} \ket{\psi_s}+\frac{1}{\sqrt{R_{s,M}d_s}}\ket{\Phi} =P_{\epsilon}\left( \ket{\psi_s}+\frac{1}{\sqrt{R_{s,M}d_s}}\ket{\Phi}\right).\]
In order to apply Lemma~\ref{lem:effspecgap}, we need to find a vector $\ket{\psi}$ such that $\Pi_A \ket{\psi}=0$ and $\Pi_B\ket{\psi}$ is proportional to $\ket{\psi_s}+\frac{1}{\sqrt{R_{s,M}d_s}}\ket{\Phi}$.
Since $f_{xy}$ is an   electrical flow there is an assignment $\{v_x\}_{x \in V}$ to the vertices of $G$, known as the voltage or potential difference such that $f_{xy}=(v_x-v_y)w_{xy}$ for all $xy \in E$; $v_s=R_{s,M}$, and  $v_m=0$ for any $m \in M$. 
See \cite{Bollobas1998} for more details on the theory of electrical networks.

We use this potential $v$ to define an (unnormalised) vector $\ket{\psi}$ to which we can apply Lemma~\ref{lem:effspecgap}.
Let $\ket{\psi}=\sum_{x \in A} v_x \ket{\phi_x}$
which clearly satisfies $\Pi_A \ket{\psi}=0$.

Before calculating $\Pi_B \ket{\psi}$, we first note that for $x \in A$ and $y \in B$,  $\braket{\phi_y|\phi_x} =w_{xy}$. Then for any $y \in B\backslash (M\cup\{s\})$, we have

\begin{align*}
(\proj{\psi_y})\ket{\psi}
&=\frac{1}{d_y}\proj{\phi_y}\left(\sum_{x \in A} v_x \ket{\phi_x}\right)\\
&= \frac{1}{d_y} \ket{\phi_y}\sum_{x \in A} v_x w_{xy}= \frac{1}{d_y} \ket{\phi_y}\sum_{x \in A} v_y w_{xy}\\
&= v_y \ket{\phi_y}
\end{align*}
where the equality in the second line holds because $y \in B\backslash (M\cup\{s\})$ and so the flow is conserved at $y$ implying that $\sum_{x \in A} (v_x-v_y) w_{xy}=0$. The final equality follows from the definition of the weighted degree $d_y=\sum_{x \in A} w_{xy}$.

Now we calculate $\Pi_B \ket{\psi}$:
\begin{align*}
\Pi_B \ket{\psi}&=\left(I-\sum_{y \in B\backslash (M\cup\{s\})} \proj{\psi_y}\right) \ket{\psi}\\
&= \sum_{x \in A} v_x \ket{\phi_x}-\sum_{y \in B\backslash (M\cup\{s\})} v_y \ket{\phi_y}\\
&=v_s \ket{\phi_s}+\sum_{y \in B \cap M} v_y \ket{\phi_y} + \sum_{x \in A, y \in B} (v_x-v_y) \sqrt{w_{xy}}\ket{xy}\\
&=R_{s,M}\sqrt{d_s}\ket{\psi_s} +0+ \sqrt{R_{s,M}}\ket{\Phi}
\end{align*}
where the middle term is zero since $v_m=0$ for any vertex $m \in M$.

%

Therefore, applying Lemma~\ref{lem:effspecgap} to $\frac{1}{R_{s,M}\sqrt{d_s}}\ket{\psi}$ gives 
\[ \norm{P_{\epsilon}\left( \ket{\psi_s}+\frac{1}{\sqrt{R_{s,M}d_s}}\ket{\Phi}\right)}  \le \epsilon \frac{1}{R_{s,M}\sqrt{d_s}} \norm{\ket{\psi}}
= \epsilon \sqrt{\frac{\sum_{x \in A} v_x^2 d_x}{R_{s,M}^2d_s}}\]
 where the final equality follows from the definition of $\ket{\psi}$.

\end{proof}

\section{Algorithm for finding marked vertices}

\subsection{Augmented graph $G'$}
\label{sec:G'}
The algorithm runs by performing phase estimation on the quantum walk operator corresponding to an altered graph $G'$ defined here.

%
%
%

Given a graph $G$, starting distribution $\sigma$, and a set of marked vertices $M$, we make two alterations to define an alternative graph $G'$ depending on two parameters $x$ and $\eta$ as follows:
\begin{enumerate}
\item We add an additional vertex $s'$ and edges of weight $w_{s'u}=\sqrt{\sigma_u}/\eta$. This is the addition made by Belovs to define the quantum walk operator in \cite{Belovs2013}.
\item For each vertex $k \in M$, we add an additional vertex $k'$ and an edge of weight $w_{kk'}=1/x$. Let $M'$ be the set of additional vertices of this form.
This addition is new in this work.
\end{enumerate}

We now define the quantum walk operator as in Section~\ref{sec:QWdef} with respect to this new graph $G'$, where $s'$ is the special starting vertex and $M'$ is the marked vertex set.

We assume we have access to an initial state $\ket{\psi_{s'}}=\sum_{u} \sqrt{\sigma_u} \ket{s'u}$ corresponding to the classical probability distribution $\sigma$.
The challenge is to choose sensible choices of $\eta$ and $x$ so that performing phase estimation and measuring in the computational basis will, with high probability, return an edge $(kk')$ for some $k \in M$.

\begin{lem}
\label{lem:PEG'}
Let $U_A$ and $U_B$ be the quantum walk operators for $G'$, with $\{s'\} \cup M'$ the set of vertices where $D_x$ acts trivially.
Then running phase estimation on $U_AU_B$ starting in the state $\ket{\psi_{s'}}=\sum_{u} \sqrt{\sigma_u} \ket{s'u}$ for time $\gamma \sqrt{\eta W+1}/\epsilon$ (for some constant $\gamma$ to be determined) with $t=\log(\gamma \sqrt{\eta W+1}/\epsilon)$ ancilla bits, outputs $0^t$ with probability $p'$
\[\frac{\eta}{R_{s',M'}}\le p' \le \frac{\eta}{R_{s',M'}}+\epsilon\]
leaving the other register in a state $\ket{\tilde{\Phi}}$ such that 
\[\frac{1}{2}\norm{ \proj{\tilde{\Phi}}-\proj{\Phi}}_1 \le \sqrt{\frac{\epsilon}{p'}}\]
\end{lem}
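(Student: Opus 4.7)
The plan is to apply Lemma~\ref{lem:Phi+-} to the augmented graph $G'$ and combine it with the phase estimation guarantees of Lemma~\ref{lem:PE}. By the construction of Section~\ref{sec:G'} the vertex $s'$ has weighted degree $d_{s'} = 1/\eta$, so $\ket{\psi_{s'}}$ is exactly the input state to which Lemma~\ref{lem:Phi+-} applies. To control the error there, I would use the fact that the electric potential in $G'$ satisfies $v_x \le v_{s'} = R_{s',M'}$ for all $x$, which gives $\sum_{x\in A} v_x^2 d_x \le R_{s',M'}^2 W'$ where $W'$ is the total weight of $G'$. Since $W' = W + 1/\eta + |M|/x$ is $O(W+1/\eta)$ in the relevant parameter regime, Lemma~\ref{lem:Phi+-} at level $\epsilon_0$ then reads $\norm{P_{\epsilon_0}\ket{\psi_{s'}} + \sqrt{\eta/R_{s',M'}}\,\ket{\Phi}} = O(\epsilon_0 \sqrt{\eta W+1})$.

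Specialising to $\epsilon_0 = 0$ eliminates the error entirely, pinning down $P_0\ket{\psi_{s'}} = -\sqrt{\eta/R_{s',M'}}\,\ket{\Phi}$ exactly; in particular $|c|^2 := |\braket{\Phi|\psi_{s'}}|^2 = \eta/R_{s',M'}$ and the part of $\ket{\psi_{s'}}$ orthogonal to $\ket{\Phi}$ lies entirely in eigenspaces of $U_AU_B$ with nonzero phase. Writing $\ket{\psi_{s'}} = c\ket{\Phi} + \sum_{\lambda\ne 0}\alpha_\lambda\ket{\lambda}$ in the eigenbasis of $U_AU_B$, and noting that phase estimation preserves $\ket{\Phi}$ and returns $0^t$ from it deterministically (Lemma~\ref{lem:PE} part~2), the outcome-$0^t$ probability is $p' = |c|^2 + \sum_{\lambda \ne 0} |\alpha_\lambda|^2 q_\lambda$, where $q_\lambda = |\braket{0^t|w(\theta_\lambda)}|^2$ is the tail probability from Lemma~\ref{lem:PE}. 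The lower bound $p' \ge \eta/R_{s',M'}$ is then immediate. The unnormalised post-measurement state on the main register is $c\ket{\Phi} + \sum_{\lambda\ne 0}\alpha_\lambda\sqrt{q_\lambda}e^{i\phi_\lambda}\ket{\lambda}$, whose overlap with $\ket{\Phi}$ is exactly $c$, so the normalised state $\ket{\tilde\Phi}$ satisfies $|\braket{\Phi|\tilde\Phi}|^2 = |c|^2/p'$ and the standard fidelity-to-trace-distance identity for pure states yields $\tfrac12 \norm{\proj{\tilde\Phi}-\proj{\Phi}}_1 = \sqrt{(p'-|c|^2)/p'}$.

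The main obstacle, and the place where the phase estimation precision $t = \log(\gamma\sqrt{\eta W+1}/\epsilon)$ becomes essential, is the noise bound $\sum_{\lambda\ne 0}|\alpha_\lambda|^2 q_\lambda \le \epsilon$, which simultaneously gives the upper bound $p' \le \eta/R_{s',M'} + \epsilon$ and the claimed trace distance $\sqrt{\epsilon/p'}$. I would handle this by a dyadic decomposition over phase bands $S_k = \{\lambda : \epsilon_0 2^k < |\theta_\lambda| \le \epsilon_0 2^{k+1}\}$ for $\epsilon_0 = 1/2^t$. In band $k$, Lemma~\ref{lem:Phi+-} applied at level $\epsilon_0 2^{k+1}$ bounds $\sum_{\lambda\in S_k}|\alpha_\lambda|^2 = O(4^k \epsilon_0^2 \eta W')$, while the standard quadratic phase estimation tail gives $q_\lambda = O(4^{-k})$ for $\lambda\in S_k$, so each band contributes $O(\epsilon_0^2 \eta W') = O(\epsilon^2/\gamma^2)$. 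Summing over the $O(t) = O(\log(\sqrt{\eta W+1}/\epsilon))$ bands gives total noise $O(\epsilon^2\log(\sqrt{\eta W+1}/\epsilon)/\gamma^2)$, which can be made smaller than $\epsilon$ by choosing $\gamma$ a sufficiently large absolute constant that absorbs the logarithmic factor, completing both bounds at once.
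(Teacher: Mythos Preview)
Your high-level strategy matches the paper's (Lemma~\ref{lem:Phi+-} on $G'$, then a dyadic phase-estimation analysis), but two steps fail as written. First, the claim that $W'=W+1/\eta+|M|/x$ is $O(W+1/\eta)$ ``in the relevant parameter regime'' is unsupported, and since neither $x$ nor $|M|$ appears in the lemma's conclusion the bound must hold uniformly in both. It can fail: take $s\in A$ joined to $u\in B$ with weight $n^2$, and $u$ joined to marked $m_1,\dots,m_n\in A$ each with weight $n$; then $W=2n^2$, $R_{s,M}=2/n^2$, and at $x=\tilde\eta\approx R_{s,M}$ one gets $|M|/x\approx n^3/2\gg W+1/\eta\approx n^2$, so your bound becomes $\epsilon\sqrt{\eta W'}\approx\epsilon\sqrt{n}$ instead of the required $O(\epsilon)$. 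The paper avoids this by treating the $kk'$ edges separately rather than via the crude voltage bound: since $v_{k'}=0$ for $k'\in M'$, one has $v_k^2\, w_{kk'}=f_{kk'}^2/w_{kk'}$, which is a summand in the energy~(\ref{eq:energy}) defining $R_{s',M'}$, so $\sum_{k\in M\cap A}v_k^2 w_{kk'}\le R_{s',M'}$; after dividing by $R_{s',M'}^2/\eta$ this contributes only $\eta/R_{s',M'}\le 1$.

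Second, summing $O(t)=O(\log(\sqrt{\eta W+1}/\epsilon))$ bands each of size $O(\epsilon^2/\gamma^2)$ yields a factor $\log(\sqrt{\eta W+1}/\epsilon)$ that no absolute constant $\gamma$ can absorb, since it grows with $\eta W$. The oversight is that in bands with $|\theta|\gtrsim 1/C$ (where $C=O(\sqrt{\eta W+1})$) the spectral-gap estimate $4^k\epsilon_0^2C^2$ already exceeds $1$ and should be replaced by the trivial bound $\sum|\alpha_\lambda|^2\le 1$. The paper's Lemma~\ref{lem:PEnew} makes exactly this split at $|\theta|=1/C$: above it, $\sum a_k^2\le 1$ together with the tail bound gives $O(C/2^t)$ directly; below it, a dyadic decomposition \emph{downward} from $1/C$ has the mass shrinking as $4^{-m}$ against a tail growing only as $2^m$, a convergent geometric series with no logarithmic loss. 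The total noise is then $O(C/2^t)=O(\epsilon)$ once $2^t=\gamma C/\epsilon$, with $\gamma$ a genuine absolute constant.
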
 

\begin{proof}
The proof follows immediately from Lemma~\ref{lem:PEnew} (in Appendix~\ref{sec:PE}) once we have checked that 
\begin{equation}\norm{P_{\epsilon} \ket{\psi_s'}+\sqrt{\frac{\eta}{R_{s',M'}}}\ket{\Phi}} \le \epsilon O(\sqrt{\eta W+1})  \qquad \forall \epsilon\ge 0
\end{equation}
By Lemma~\ref{lem:Phi+-}, we know that for any $\epsilon \ge 0$, 
\[\norm{P_{\epsilon} \ket{\psi_s'}+\frac{1}{\sqrt{R_{s',M'}d_{s'}}}\ket{\Phi}} \le \epsilon\sqrt{\frac{\sum_{x\in A}v_x^2d_x}{R_{s',M'}^2d_{s'}}}
.\]
Observing that $d_{s'}=\sum_{u} \sigma_u/\eta =1/\eta$, this is equivalent to 
\[\norm{P_{\epsilon} \ket{\psi_s'}+\sqrt{\frac{\eta}{R_{s',M'}}}\ket{\Phi}} \le \epsilon\sqrt{\eta\frac{\sum_{x\in A}v_x^2d_x}{R_{s',M'}^2}}
.\]
To complete the proof, we need to bound $\sum_{a\in A}v_a^2d_a$
\begin{align}
\sum_{a\in A}v_a^2d_a&=\sum_{ab \in E(G')} v_a^2 w_{ab}\\
&=\sum_{u} v_u^2 w_{s'u}+\sum_{ab \in E(G)} v_a^2 w_{ab}+\sum_{k \in M\cap A} v_{k}^2w_{kk'}\\
&=\sum_{u} v_u^2 \sigma_u/\eta+\sum_{ab \in E(G)} v_a^2 w_{ab}+\sum_{k \in M\cap A} f_{kk'}^2/w_{kk'}
\end{align}
For the first two terms, we use the fact that $v_a \le R_{s',M'}$ for all $a$, and thus these terms are bounded by $R_{s',M'}^2(1/\eta+W)$. The third term is bounded by $R_{s',M'}$ by observing that this term is just part of the positive sum Eq(\ref{eq:energy}) that makes up the definition of $R_{s',M'}$ .
\[\norm{P_{\epsilon} \ket{\psi_s'}+\sqrt{\frac{\eta}{R_{s',M'}}}\ket{\Phi}} \le \epsilon\sqrt{\eta\frac{\sum_{x\in A}v_x^2d_x}{R_{s',M'}^2}}
\le \epsilon \sqrt{\eta\left(\frac{1}{\eta} +W +\frac{1}{R_{s',M'}}\right)}\le \epsilon \sqrt{\eta W+2}\]
\end{proof}

\subsection{Algorithm to find $\eta$ such that $\eta/R\approx 1/2$}
\label{sec:findeta}
In this section, we only ever deal with the case when $x=0$, where effectively we have not made adjustment 2. in the definition of $G'$, and we can identify $M$ and $M'$. As previously discussed, given the analysis of Section~\ref{sec:electricflow}, we wish to run phase estimation on $U_AU_B$ and the state $\ket{\psi_s'}$ and get the outcome $\ket{0^t}$ in order to approximately prepare the electric flow state $\ket{\Phi}$ which we will then measure.
We need to make a sensible choice of $\eta$, such that the probability of getting $\ket{0^t}$ is $\Omega(1)$.

Here we present an algorithm for finding such an $\eta$.

\begin{algorithm}
\caption{Find $\eta$ such that $\eta/R_{s',M'}\approx 1/2$}
\label{alg:findeta}
\KwIn{Graph $G$, initial distribution $\sigma$ over vertices, upper bound $W$ on total weight,}
Set $\eta=1/W$ and $x=0$\;
Construct $G'$ and $U_AU_B$ for this choice of $\eta,x$\;
Starting in $\ket{\psi_s'}$, run phase estimation on $U_AU_B$ for time $\sqrt{\eta W}$ \;
Do amplitude estimation on the all zero string on the phase estimation register to constant accuracy to get an estimate $\tilde{a}$ of $\eta/R_{s',M'}$.\;
If $\tilde{a} \le 1/2$, double $\eta$ and return to Step 2. Otherwise, output $\eta$\;
\end{algorithm}
 The correctness of this algorithm follows from Lemma~\ref{lem:PEG'}.
 To ensure a probability of success at least $1-\delta$, we can do the standard trick of repeating the amplitude estimation step $O(\log 1/\delta)$ times and taking the median.
A more careful analysis of this algorithm is is presented by Jarret and Wan in \cite{Jarret2017}.

We note that for any $\eta$, we have $R_{s',M}\le \eta + R_{\sigma,M}$ because $\eta + R_{\sigma,M}$ is the energy of the unit flow from $s'$ to $M$ of the form: $\sigma_u$ flow is sent from $s'$ to $u$, and then the electric flow from $\sigma$ to $M$. $R_{s',M}$ is the minimal energy of any flow from $s'$ to $M$, and thus  $R_{s',M}\le \eta + R_{\sigma,M}$ as claimed.

So if we choose $\eta\ge R_{\sigma, M'}$, then the probability of phase estimation successfully outputting $0^t$ as in Lemma~\ref{lem:PEG'} is at least 
\[\frac{\eta}{R_{s',M}} \ge \frac{\eta}{\eta+R_{\sigma,M}}\ge\frac{1}{2}\]

Algorithm~\ref{alg:findeta} therefore terminates by the time $\eta= R_{\sigma,M}$.
The total runtime of Algorithm~\ref{alg:findeta} is:
\[\sum_{i=1}^{\log(R_{\sigma,M}W)} \sqrt{2^i} = O(\sqrt{R_{\sigma,M}W})\]

\subsubsection{Estimating the effective resistance $R_{s,M}$}
\label{sec:estimateR}
Note that this algorithm provides an estimate of the effective resistance in the original graph $G$ when the initial distribution $\sigma$ is concentrated at a single vertex $s$, because in this case $R_{s',M}=\eta+R_{s,M}$.
Given a constant multiplicative accuracy approximation $\tilde{a}$ of $\eta/R_{s',M}$, and exact knowledge of $\eta$, we can therefore estimate $R_{s,M}$ to constant multiplicative accuracy.

To obtain a more accurate estimate of $R_{s,M}$ to multiplicative precision $\epsilon$, we could do one more iteration of Algorithm~\ref{alg:findeta} with $\eta$ set to the constant accuracy approximation for $R_{s,M}$ that has been obtained so far. Then running phase estimation for $O(\sqrt{R_{s,M}W}/\epsilon)$ time will leave the ancilla register in the state $\ket{0^t}$ with probability $\eta/(\eta+R_{s,M})$ up to accuracy $\epsilon$ by Lemma~\ref{lem:PEG'}. Finally, performing amplitude estimation to accuracy $\epsilon$ requires $O(1/\epsilon)$ iterations, giving a total runtime of $O(\sqrt{R_{s,M}W}/\epsilon^2)$, as claimed in Theorem~\ref{thm:estimateR}.

We believe that this runtime could be improved to $O(\sqrt{R_{s,M}W}\frac{1}{\epsilon}\log(\frac{1}{\epsilon}))$ by performing phase estimation for time $O(\sqrt{R_{s,M}W})$ repeatedly  $O(\log(\frac{1}{\epsilon}))$ times (and still performing amplitude estimation to accuracy $\epsilon$). This is because if all $O(\log(\frac{1}{\epsilon}))$ instances of phase estimation return the zero string, then we would we expect to have prepared the state $\ket{\Phi}$ to accuracy ${\epsilon}$, but we do not provide a full analysis of this claim here.

\subsection{Simple algorithm with $O(\poly(|M|)$ scaling}
\label{sec:simplealg}
Now that we have found an appropriate choice of $\eta$ , we would like to choose a value of $x$ that satisfies the following two conditions:
\begin{enumerate}
\item $R_{s',M'}=O(R_{\sigma,M})$ so that we can (approximately) prepare $\ket{\Phi}$ in time $O(\sqrt{R_{\sigma,M}W})$.
\item $\dfrac{x\sum_{k \in M} f_{kk'}^2}{R_{s',M'}}=\Omega(1)$ so that sampling $\ket{\Phi}$ returns a marked vertex with good probability.
\end{enumerate}
Unfortunately we are not in general able to find any such choice of $x$ that satisfies both of these conditions. In the rest of this section we describe a natural, but suboptimal choice of $x$, and in Section~\ref{sec:fullalg} we describe how to improve on this choice.

A natural choice for $x$ is the value $\tilde{\eta}$ returned by Algorithm~\ref{alg:findeta}. 
We first prove that this choice of $x$ satisfies Condition 1: 
note that $\tilde{\eta}=\Theta(R_{s'M})$ and $\tilde{\eta} \le R_{\sigma,M}$ as argued in Section~\ref{sec:findeta}; so we have that $R_{s',M} =O(R_{\sigma,M})$. It remains to show that $R_{s',M'} =O(R_{s',M})$. To show this, observe that the possible extra energy between $M$ and $M'$ in the definition of $R$ (Eq(\ref{eq:energy})) is 
\[x\sum_{k \in M} f_{kk'}^2\le x=\tilde{\eta} \le R_{\sigma,M}\]
where the first inequality holds because $f$ is a unit flow.
Thus $R_{s',M'}=O(R_{\sigma,M})$ as claimed.

Considering Condition 2, we now have 
\[\dfrac{x\sum_{k \in M} f_{kk'}^2}{R_{s',M'}}=\Omega(1) \sum_{k \in M} f_{kk'}^2\]
but the problem is that the best lower bound we can put on $\sum_{k \in M} f_{kk'}^2$ is $\Omega(1/|M|)$, because the flow may be evenly distributed amongst the vertices in $M$.

To find a marked vertex would therefore require preparing $\ket{\Phi}$ to accuracy $1/|M|$ (which by Lemma~\ref{lem:PEG'} takes time $O(\sqrt{R_{\sigma,M}W}|M|^2)$) and repeatedly sampling from this state $O(|M|)$ times until a marked element is found. This algorithm is described in Algorithm~\ref{alg:simple} and takes a total of $O(\sqrt{R_{\sigma,M}W}|M|^3)$ time.
%

\begin{algorithm}
\caption{Simple method to find a marked vertex }
\label{alg:simple}
\KwIn{Graph $G$, initial distribution $\sigma$ over vertices, upper bound $W$ on total weight, upper bound $m$ on $|M|$}
Run Algorithm~\ref{alg:findeta} to find $\tilde{\eta}$ such that $\tilde{\eta}/R_{s',M}\approx 1/2$\;
Then set $x=\eta=\tilde{\eta}$, and construct $G'$ and $U_AU_B$ for this choice of $\eta,x$\;
Repeat the following until success: \\
Run phase estimation for time $O(\sqrt{\tilde{\eta}W}m^2)$. Then measure and check if element is marked.
\end{algorithm}

\subsection{Full algorithm with improved dependence on $|M|$}
\label{sec:fullalg}
In order to come as close as possible to satisfying Conditions 1 and 2 from Section~\ref{sec:simplealg}, we want to understand how $x$ and $R_{s',M'}$ and $\sum_{k \in M} f_{kk'}^2$ are related. Fortunately we are able to derive the following simple relation:
%
%

\begin{lem}
\label{lem:Rxq}
	Let $R_{s',M'}(x)$ be the resistance of the electrical flow when additional edges of resistance $x$ are added to each marked vertex. Let $q(x)=\sum_{k \in M} |f_{kk'}|^2$ be the $l_2$ norm of the flow down these additional edges. 
	Then 
	\begin{enumerate}
		\item $R_{s',M'}(x)$ is convex: $R_{s',M'}(x+y)\le R_{s',M'}(x)+yq(x)$ for all $y$.
		\item $\dfrac{d}{dx}R_{s',M'}=q(x)$
	\end{enumerate}
\end{lem}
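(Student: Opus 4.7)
The plan is to exploit the variational definition of $R_{s',M'}(x)$ as the minimum energy over unit flows from $s'$ to $M'$, together with the crucial fact that the space of unit flows is determined entirely by linear flow-conservation constraints and does not depend on $x$ at all; only the quadratic energy functional depends on $x$, and it does so affinely.

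For part 1, I would let $f^*$ denote the electrical flow achieving $R_{s',M'}(x)$, so by the definition of $q$ we have $\sum_{k \in M} (f^*_{kk'})^2 = q(x)$. Since $f^*$ remains a valid unit flow when the additional edge resistances change from $x$ to $x+y$, its energy in the new network is an upper bound on $R_{s',M'}(x+y)$. A direct computation shows that this energy differs from $R_{s',M'}(x)$ only on the edges $\{kk'\}_{k\in M}$, where the contribution changes from $x\sum_{k}(f^*_{kk'})^2$ to $(x+y)\sum_{k}(f^*_{kk'})^2$, a change of exactly $y\,q(x)$. This gives the stated inequality.

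For part 2, I would apply part 1 at base point $x$ with step $+y$ to get $R_{s',M'}(x+y)-R_{s',M'}(x) \le y\,q(x)$, and at base point $x+y$ with step $-y$ to get $R_{s',M'}(x)-R_{s',M'}(x+y) \le -y\,q(x+y)$. Combining yields the squeeze
\[
y\,q(x+y) \;\le\; R_{s',M'}(x+y) - R_{s',M'}(x) \;\le\; y\,q(x)
\]
for $y>0$ (with the inequalities reversed for $y<0$). Dividing by $y$ and letting $y\to 0$ gives the derivative, provided $q$ is continuous at $x$.

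The main obstacle is precisely this continuity of $q$. One clean way to handle it is to note that the inequality in part 1 exhibits $R_{s',M'}$ as the infimum of a family of affine functions of $x$ (one for each unit flow), hence concave, hence continuous with one-sided derivatives everywhere. The strict convexity of the energy functional as a quadratic form on the affine subspace of unit flows implies that the minimiser $f^*$ varies continuously with $x$, so $q(x)=\sum_{k \in M}(f^*_{kk'})^2$ is continuous as well. An alternative, more elementary route is to observe that the squeeze above already forces both the left and right derivatives of $R_{s',M'}$ at $x$ to lie between any pair $q(x^-),q(x^+)$ of one-sided limits, so differentiability together with the identity $R'_{s',M'}(x) = q(x)$ follows without invoking an abstract theorem.
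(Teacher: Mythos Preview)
Your argument is correct and matches the paper's proof essentially step for step: use the optimal flow at $x$ as a feasible flow at $x+y$ to obtain part~1, then apply part~1 at both $x$ and $x+y$ to squeeze the difference quotient between $q(x+y)$ and $q(x)$, and finish by invoking continuity of $q$ (which the paper justifies, as you do, via the fact that the electric flow is determined by linear equations). Your additional remark that the inequality in part~1 actually exhibits $R_{s',M'}$ as \emph{concave} rather than convex is a fair observation about the paper's wording.
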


\begin{proof}
	Consider the electric flow from $s'$ to $M'$ in the graph $G'(x)$. It is easy to check that the energy of this flow in the graph $G'(x+y)$ is exactly  $R_{s',M'}(x)+yq(x)$.
	But the minimal energy of all such flows from $s'$ to $M'$ in $G'(x+y)$ is exactly the effective resistance $R_{s',M'}(x+y)$. Therefore we have proven 1.:
	\[ R_{s',M'}(x+y)\le R_{s',M'}(x)+yq(x)\quad \forall y\]
Now we prove 2. By a relabelling $x+y=z$ and $x=z+w$, we have 
$R_{s',M'}(z)\le R_{s',M'}(z+w)-wq(z+w)$, and so:
\[ q(x+y) \le \frac{R_{s',M'}(x+y)-R_{s',M'}(x)}{y} \le q(x)\]
Taking the limit $y \rightarrow 0$ completes the proof, noting that $q(x)$ is a continuous function of the electric flow which is determined by a set of linear equations and hence $q(x)$ is itself continuous.
\end{proof}

The idea of the algorithm is then to sample $x$ from an interval $[a,b]$ with probability $1/(x\log(b/a))$. Then the expected probability of hitting a marked vertex when sampling from $\ket{\Phi}$ is 
\[\int_a^b \frac{x q(x)}{R_{s',M'}(x)} \frac{1}{x\log(b/a)} dx
\ge\frac{1}{\log(b/a)R_{s',M'}(b)}\int_a^b q(x)dx=\frac{R_{s',M'}(b)-R_{s',M'}(a)}{\log(b/a)R_{s',M'}(b)}\]
where the first inequality holds because $R_{s',M'}(x)$ increases monotonically with $x$, and the equality follows from Lemma~\ref{lem:Rxq}.

If the interval $[a,b]$ is such that $R_{s',M'}(x)$ increases by a constant multiplicative factor across this interval (but still satisfying Condition 1: $R_{s',M'}(b)=O(R_{\sigma,M})$), then the expected probability of finding a marked vertex is $\Omega(1/\log(b/a))$. It therefore suffices to prepare $\ket{\Phi}$ to accuracy $1/\log(b/a)$, which can be done in time $O(\sqrt{R_{\sigma,M}W}\log^2(b/a))$ by Lemma~\ref{lem:PEG'}. 
Then measure until a marked vertex is found, which will take $O(\log(b/a)$ tries, resulting in a total runtime of $O(\sqrt{R_{\sigma,M}W}\log^3(b/a))$.

In Algorithm~\ref{alg:findvertex}, we first find such an interval $[a,b]$, by setting $a=\tilde{\eta}=\Theta(R_{\sigma,M})$ the output of Algorithm~\ref{alg:findeta}. 
Then the algorithm doubles $x$ until it finds that the output of amplitude estimation (which approximates $\eta/R_{s',M'}$) has  halved, implying that $R_{s',M'}$ has doubled.
Each step of this part of the algorithm takes time $O(\sqrt{R_{\sigma,M}W})$ and there are $\log(b/a)$ steps, so this part of the algorithm only takes $O(\sqrt{R_{\sigma,M}W}\log(b/a))$ time.
\begin{algorithm}
\caption{Find a marked vertex with improved dependence on $|M|$}
\label{alg:findvertex}
\KwIn{Graph $G$, initial distribution $\sigma$ over vertices, upper bound $W$ on total weight}
Run Algorithm~\ref{alg:findeta} to find $\tilde{\eta}$ such that  $\tilde{\eta}/R_{s',M}\approx 1/2$ \\
Then set $x=\eta=\tilde{\eta}$\\

Run phase estimation for time $O(\sqrt{\tilde{\eta}W})$ and amplitude estimation on $\ket{0^t}$ to obtain an estimate for $\tilde{\eta}/R_{s',M'}$\\
Double $x$ and return to step 3. until outcome of amplitude estimation has halved.\\
Set $a=\tilde{\eta}$ and $b$ equal to the final value of $x$ in Step 4\\
Now repeat until success: Fix $\eta=\tilde{\eta}$ and sample $x \in [a,b]$ with pdf $1/x\log(b/a)$.
Construct $U_AU_B$ corresponding to this choice of $x, \eta$. Run phase estimation for time $O(\sqrt{\tilde{\eta}W}\log^2(b/a))$. Then measure and check if element is marked.
\end{algorithm}

So Algorithm~\ref{alg:findvertex} finds a marked vertex in time $O(\sqrt{R_{\sigma,M} W} \log^3(b/a))$. It only remains to show that $b/a=O(|M|)$ to prove Theorem~\ref{thm:findm}.
 
Recall that $q(x)\ge 1/|M|$ because in the worst case the flow is evenly distributed between the $|M|$ marked vertices. By Lemma~\ref{lem:Rxq} 
\[\frac{dR_{s',M'}}{dx}=q(x)\ge\frac{1}{|M|} \]
and therefore
\[ R_{s',M'}(b)-R_{s',M'}(a)\ge\frac{b-a}{|M|}\quad \Rightarrow \quad \frac{b}{a} \le 1+|M|\frac{R_{s',M'}(b)-R_{s',M'}(a)}{a}\]
It remains to observe that $R_{s',M'}(b)\approx 2R_{s',M'}(a)$ and that for $a=\tilde{\eta}$
\[\frac{R_{s',M'}(a)}{a}=O\left(\frac{R_{\sigma,M}}{\tilde{\eta}}\right)=O(1)\]

\section*{Acknowledgements}
I would like to thank Ashley Montanaro, Joran Van Apeldoorn and Simon Apers for some very interesting discussions which contributed to this project.

\bibliographystyle{abbrv}
\bibliography{quantumwalks}

\appendix
\section{Phase estimation analysis}
\label{sec:PE}
In this section we analyse how phase estimation performs in the case where we have a bound of the form provided by the effective spectral gap lemma.

\begin{lem}
\label{lem:PEnew}
Let $U$ be a unitary and let $P_{\epsilon}$ be the projector onto eigenstates of $U$ with eignvalue $e^{2i \theta}$ for $|\theta|  \le \epsilon$.
Let $\ket{\phi}$ be an eigenvector of $U$ with eigenvalue $1$, and $\ket{\psi}$ be a normalised state such that $\norm{P_{\epsilon}\ket{\psi}-\sqrt{p}\ket{\phi}}\le \epsilon C$ for all $\epsilon$. 
Then performing phase estimation with $t$ ancilla bits takes time $O(2^t)$ and outputs $0^t$ with probability $p' \in [p,p+ O\left(\frac{C}{2^t}\right)]$ leaving a state $\ket{\phi'}$ such that 
\[ \frac{1}{2}\norm{\proj{\phi'}-\proj{\phi}}_1=O\left( \sqrt{\frac{C}{p'2^t}}\right)\]
\end{lem}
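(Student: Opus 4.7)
The plan is to decompose $\ket{\psi}$ in the eigenbasis of $U$, track how each component is transformed by phase estimation, and bound the excess probability of the outcome $\ket{0^t}$ using the hypothesis $\norm{P_\epsilon\ket{\psi}-\sqrt{p}\ket{\phi}}\le \epsilon C$ together with the shell structure of the phase-estimation error.

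First I would specialise the hypothesis to $\epsilon=0$ to conclude $P_0\ket{\psi}=\sqrt{p}\ket{\phi}$, and then write $\ket{\psi}=\sqrt{p}\ket{\phi}+\ket{\chi}$ where $\ket{\chi}=\sum_{j:\theta_j\ne 0}\alpha_j\ket{\phi_j}$ is supported only on eigenvectors of $U$ with nonzero phases $\theta_j$ (in particular $\ket{\chi}\perp\ket{\phi}$ and $\norm{\ket{\chi}}^2=1-p$). In this language the hypothesis reads $\sum_{|\theta_j|\le\epsilon}|\alpha_j|^2\le \epsilon^2 C^2$ for every $\epsilon\ge 0$. By parts 2--3 of Lemma~\ref{lem:PE}, phase estimation sends $\ket{\phi}\ket{0^t}\mapsto\ket{\phi}\ket{0^t}$ and each $\ket{\phi_j}\ket{0^t}\mapsto\ket{\phi_j}\ket{w_j}$ with $|\braket{0^t|w_j}|^2=O(1/(2^t|\theta_j|))$; orthogonality of eigenvectors across distinct eigenvalues then gives $p'=p+\Delta$ where $\Delta=\sum_j|\alpha_j|^2|\braket{0^t|w_j}|^2\ge 0$.

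The heart of the argument is showing $\Delta=O(C/2^t)$. I would dyadically decompose the phase axis into shells $S_k=\{j:|\theta_j|\in(2^{-k-1},2^{-k}]\}$ for $0\le k<t$ and bound the weight of each shell by the minimum of two estimates: the hypothesis at $\epsilon=2^{-k}$ gives $\sum_{j\in S_k}|\alpha_j|^2\le 4^{-k}C^2$, while the trivial bound gives $\le 1-p$. The shell-$k$ contribution to $\Delta$ is then at most $O(2^k/2^t)\cdot\min(4^{-k}C^2,\,1-p)$. Splitting the sum at the crossover $2^{k^*}=C/\sqrt{1-p}$, each of the two resulting geometric series evaluates to $O(C\sqrt{1-p}/2^t)=O(C/2^t)$. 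The residual range of very small phases $|\theta_j|\le 2^{-t}$, where the phase-estimation bound exceeds $1$, is handled using $|\braket{0^t|w_j}|^2\le 1$ together with the hypothesis at $\epsilon=2^{-t}$, contributing at most $O(C^2/4^t)\le O(C/2^t)$ whenever $2^t\ge C$.

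Finally, since $\ket{\phi}$ is orthogonal to every $\ket{\phi_j}$, the normalised post-measurement state $\ket{\phi'}=\bigl(\sqrt{p}\ket{\phi}+\sum_j\alpha_j\braket{0^t|w_j}\ket{\phi_j}\bigr)/\sqrt{p'}$ satisfies $\braket{\phi|\phi'}=\sqrt{p/p'}$, and the standard pure-state identity $\tfrac12\norm{\proj{\phi'}-\proj{\phi}}_1=\sqrt{1-|\braket{\phi|\phi'}|^2}$ gives $\tfrac12\norm{\proj{\phi'}-\proj{\phi}}_1=\sqrt{(p'-p)/p'}=O\bigl(\sqrt{C/(p'\,2^t)}\bigr)$. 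The main obstacle is getting the right power of $C$ in the bound on $\Delta$: using only the $\epsilon C$ hypothesis in a naive dyadic sum yields only $\Delta=O(C^2/2^t)$, and it is the combination with the trivial norm bound via the $k^*$ crossover that shaves the extra factor of $C$ and produces the claimed $O(C/2^t)$.
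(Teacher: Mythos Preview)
Your proposal is correct and follows essentially the same route as the paper: decompose $\ket{\psi}$ in the eigenbasis of $U$, use $\epsilon=0$ to isolate the $\sqrt{p}\ket{\phi}$ component, bound the excess probability $\Delta=\sum_j|\alpha_j|^2|\braket{0^t|w_j}|^2$ by combining the phase-estimation tail bound with the hypothesis via a dyadic shell decomposition, and finish with the pure-state trace-distance identity. The only cosmetic difference is the placement of the split: the paper cuts once at $|\theta|=1/C$, bounding the large-phase region in a single piece using $\sum_k a_k^2\le 1$ and doing the dyadic sum only below $1/C$, whereas you run the dyadic decomposition over the whole range and place the crossover at $2^{k^*}=C/\sqrt{1-p}$; both yield $\Delta=O(C/2^t)$ and your separate treatment of $|\theta|\le 2^{-t}$ is unnecessary in the paper's version since the geometric series already converges.
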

\begin{proof}
We do phase estimation to $t$-bits of accuracy, which takes time $2^t$ (see Lemma~\ref{lem:PE}). 
We decompose $\ket{\psi}$ in the eigenbasis of $U$:
\[ \ket{\psi}= p\ket{\phi}+\sum_{k}a_k \ket{\psi_k} \qquad \text{ where } U\ket{\psi_k}= e^{2i \theta_k}\ket{\psi_k}\]
We now consider the condition $\norm{P_{\epsilon}\ket{\psi}-\sqrt{p}\ket{\phi}}\le \epsilon C$. Taking $\epsilon=0$, we see that $P_0\ket{\psi}=\sqrt{p}\ket{\phi}$ and hence that $\theta_k \in (0,\pi)$ for all $k$.
For $\epsilon \ge 0$ we can write this condition as:
\begin{equation}
\label{eq:smalltheta}
\sum_{k: 0<|\theta_k| \le \epsilon} a_k^2 \le \epsilon^2 C^2.
\end{equation}
The other inequality that we will need comes from the fact that for large values of $\theta$, phase estimation is unlikely to return a zero answer. 
By part 3 of Lemma~\ref{lem:PE}, $V\ket{\psi_k}\ket{0^t}=\ket{\psi_k}\ket{w_k}$ for some $\ket{w_k}$. Let $\mu_k=|\braket{w_k|0^t}|$. Then there exists a universal constant $\gamma$ such that 
\begin{equation}
\label{eq:largetheta}
\mu_k^2 \le \frac{\gamma}{2^t \epsilon} \quad \forall k: |\theta_k| \ge \epsilon
\end{equation}
Then the probability of getting $\ket{0^t}$ after measuring the ancilla register is  
\[p'=p+\sum_k |a_k \mu_k|^2=p+\sum_{k: |\theta_k| \le 1/C} a_k^2\mu_k^2 +\sum_{k: |\theta_k| > 1/C} a_k^2\mu_k^2 \]

We bound the first part of this sum using Eq(\ref{eq:largetheta}), and use the fact that $\ket{\psi}$ is normalised:
\[\sum_{k: |\theta_k| > 1/C} a_k^2\mu_k^2 \le \sum_{k: |\theta_k| > 1/C} a_k^2 \frac{\gamma C}{2^t}\le \sum_{k} a_k^2 \frac{\gamma C}{2^t}\le\frac{\gamma C}{2^t}\]
To bound the second part of this sum we first further subdivide the sum into regions of ever decreasing $\theta$, and use Eq(\ref{eq:largetheta}) and then Eq(\ref{eq:smalltheta}).
\[\sum_{k: |\theta_k| \le 1/C} a_k^2\mu_k^2= \sum_{m=0}^{\infty} \left(\sum_{k: \frac{1}{C2^{m+1}} \le |\theta_k| \le \frac{1}{C2^m}}a_k^2\mu_k^2 \right)
\le  \sum_{m=0}^{\infty} \left(\sum_{k: \frac{1}{C2^{m+1}} \le |\theta_k| \le \frac{1}{C2^m}}a_k^2 \frac{\gamma C2^{m+1}}{2^t}\right)\]
\[\le \sum_{m=0}^{\infty} \left(\sum_{k: |\theta_k| \le \frac{1}{C2^m}}a_k^2 \frac{\gamma C2^{m+1}}{2^t}\right)
 \le \sum_{m=0}^{\infty}  \frac{1}{2^{2m}}\frac{\gamma C2^{m+1}}{2^t}=\frac{2\gamma C}{2^t}\]
 The final equality is a standard geometric series calculation.
 
 Finally, if the $0^t$ outcome is observed, then the post measurement state on the remaining register is of the form
 \[\ket{\phi'}=\frac{1}{\sqrt{p'}}\left(\sqrt{p}\ket{\phi}+ \sum_k a_k^2 \mu_k^2\ket{\phi^{\perp}}\right) \]
for some state $\ket{\phi^{\perp}}$ orthogonal to $\ket{\phi}$.
Then 
\[ \frac{1}{2}\norm{\proj{\phi'}-\proj{\phi}}_1=\sqrt{1-|\braket{\phi'|\phi}|^2}=\sqrt{1-\frac{p}{p'}} =\sqrt{\frac{\sum_k a_k^2 \mu_k^2}{p'}} =O\left( \sqrt{\frac{C}{p'2^t}}\right)\]

\end{proof}
\section{No knowledge of $W$}
\label{sec:unknownW}
The algorithms described above all rely on knowledge of an upper bound of $W$ in order to run.
If no upper bound on $W$ is known, then we can simply guess different values of $W$ with only a small difference to the run time.

\newcommand{\Wmin}{W_{\text{min}}}
\newcommand{\Rmin}{R_{\text{min}}}

Let $\Wmin$ and $\Rmin$ be lower bounds for $W$ and $R_{\sigma,M}$ respectively.
For a particular value of $T$, try all choices for  $W$ in the set$\{ \Wmin, 2\Wmin, \dots , T^2/\Rmin \}$ and run for a time $T$. If none of these choices for $W$ work, double $T$ and start again.

For a fixed $T$ each stage takes a total time $T\log\left(\frac{T^2}{\Rmin \Wmin}\right)$. With high probability the algorithm completes when $T= O(\sqrt{RW}\log |M| )$ so the total time taken is therefore:
\[\sum_{T=1}^{O(\sqrt{RW}\log |M| )}T\log\left(\frac{T^2}{\Rmin \Wmin}\right)=O\left(\sqrt{RW}\log |M| \log \left(\frac{\sqrt{RW}\log|M|}{\Rmin \Wmin} \right)\right)\]
For the special case where $\sigma$ is concentrated entirely at a single vertex $s$, a sensible choice would be $\Rmin=1/d_s$ and $\Wmin=d_s$ where $d_s$ is the weighted degree of $s$ in $G$. 
These are indeed lower bounds for $R_{s,M}$ and $W$, and furthermore this choice implies that $\Rmin \Wmin=1$, resulting in a overall runtime of $\tilde{O}(\sqrt{RW}\log |M|)$.

This runtime can also be achieved for more general distributions $\sigma$, by setting $\Rmin=1/d_{\sigma}$ and $\Wmin=d_{\sigma}$ where we define $d_{\sigma}$, (which we may think of as the weighted degree of $\sigma$),  by 
\begin{equation}
\frac{1}{d_\sigma}=\sum_{u \in V} \frac{\sigma_u^2}{d_u}
\label{eq:dsigma}
\end{equation}
This quantity $d_{\sigma}$ may be difficult to compute, but we can show that it provides the necessary lower bounds for $R$ and $W$.

\begin{lem}
For $d_{\sigma}$ defined in Eq (\ref{eq:dsigma}):
\[R \ge \frac{1}{d_{\sigma}} \ge \frac{1}{W}\] 
and the second inequality is an equality if and only if $\sigma$ is the stationary distribution on $G$.
\end{lem}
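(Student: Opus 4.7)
The result is two inequalities, and I would prove each by a single application of Cauchy--Schwarz, exploiting the bipartite structure throughout.

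For the first inequality $R \ge 1/d_\sigma$, I would start from the electric flow $\{f_{xy}\}$ from $\sigma$ to $M$, whose energy is exactly $R$, and apply Cauchy--Schwarz vertex by vertex. For each unmarked $u \in A$, flow conservation gives $\sum_{y:(uy)\in E} f_{uy} = \sigma_u$, so
\[ \sigma_u^2 = \Bigl(\sum_y \tfrac{f_{uy}}{\sqrt{w_{uy}}}\sqrt{w_{uy}}\Bigr)^2 \le \Bigl(\sum_y \tfrac{f_{uy}^2}{w_{uy}}\Bigr) d_u. \]
Dividing by $d_u$ and summing over $u \in A$, the bipartite structure makes each edge appear on the right-hand side exactly once, so that sum equals $R$; since $\sigma$ is supported on $A$, the left-hand side is $1/d_\sigma$, yielding $1/d_\sigma \le R$. (I would assume $\sigma_m = 0$ for $m \in M$, which loses no generality since otherwise a marked vertex is found simply by sampling from $\sigma$.)

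For the second inequality and its equality case, I would apply Cauchy--Schwarz to $1 = \sum_{u\in A}\sigma_u$:
\[ 1 = \Bigl(\sum_{u \in A}\tfrac{\sigma_u}{\sqrt{d_u}}\sqrt{d_u}\Bigr)^2 \le \Bigl(\sum_{u\in A}\tfrac{\sigma_u^2}{d_u}\Bigr)\Bigl(\sum_{u\in A}d_u\Bigr) = \tfrac{W}{d_\sigma}, \]
where $\sum_{u\in A}d_u = W$ again by bipartiteness (each edge has exactly one endpoint in $A$). Equality in Cauchy--Schwarz forces $\sigma_u/\sqrt{d_u}\propto \sqrt{d_u}$, i.e., $\sigma_u = d_u/W$ for all $u \in A$ after normalization, which is precisely the stationary distribution for the bipartite random walk.

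The hardest aspect is really just careful bookkeeping: ensuring the bipartite convention is used consistently so that sums over one side of the bipartition count each edge once (and in particular that the cross terms between the Cauchy--Schwarz step and the edge-energy step align correctly), and handling the minor nuisance that $\sigma$ could in principle place mass on a marked vertex. Both are routine.
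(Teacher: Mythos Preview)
Your proof is correct and essentially matches the paper's. For $R \ge 1/d_\sigma$, the paper argues that the minimum energy needed to push $\sigma_u$ units of flow out of $u$ through its incident edges is $\sigma_u^2/d_u$; your per-vertex Cauchy--Schwarz is exactly the rigorous form of that statement, and your use of bipartiteness to avoid double-counting edges when summing over $u\in A$ is the clean way to complete the step (the paper's version of this summation is a touch informal).

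For $1/d_\sigma \ge 1/W$, the paper instead completes the square, writing $\sigma_u^2/d_u = (\sigma_u - d_u/W)^2/d_u + 2\sigma_u/W - d_u/W^2$ and summing. Your Cauchy--Schwarz argument is of course equivalent (Cauchy--Schwarz \emph{is} a completed square), but it is tidier: the equality case $\sigma_u \propto d_u$ drops out automatically, and by summing only over $A$ you get $\sum_{u\in A} d_u = W$ on the nose, which sidesteps any ambiguity about whether the stationary distribution is $d_u/W$ or $d_u/(2W)$.
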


\begin{proof}
For a given vertex $u$, consider the edges $uv$ for which there is flow from $u$ to $v$ in the electric flow. Then the minimum energy of a unit flow out of $u$ along these edges is $1/d_u$ 
For each vertex $u$, there is $\sigma_u$ flowing out of $u$ in the electrical flow. Considering the contribution to the energy (defined in Eq (\ref{eq:energy})) of just the edges attached to $u$,  which even if evenly distributed contributes an energy of $\sigma_u^2/d_u$ to the resistance.

For the second inequality, observe that 
\begin{align}\frac{1}{d_{\sigma}}= \sum_{u \in V} \frac{\sigma_u^2}{d_u}
&=\sum_{u \in V} \left[\frac{\left(\sigma_u-d_u/W\right)^2}{d_u} +\frac{2d_u \sigma_u}{Wd_u} -\frac{d_u^2}{d_u W^2}\right]\\
&\ge 0+\frac{2}{W}\sum_{u \in V} \sigma_u -\frac{1}{W}\sum_{u \in V} \frac{d_u}{W}= \frac{1}{W}
\end{align}
where we have used the inequality $(\sigma_u-d_u/W)^2 \ge 0$ and the fact that $\sigma_u$ and $d_u/W$ are probability distributions which sum to $1$.
This inequality is tight if and only if $\sigma_u = d_u/W$ for all $u$, which is the stationary distribution.
\end{proof}

	\end{document}